\documentclass[lettersize,journal]{IEEEtran}

\usepackage{graphicx} 

\usepackage{algorithm}
\usepackage{algorithmic}
\usepackage{amsmath}
\usepackage{amssymb}
\usepackage{amsthm}
\usepackage{cleveref}
\usepackage{xcolor}
\usepackage{caption}
\usepackage{subcaption}

\usepackage{macros}

\title{Image Priors for Single-Shot Coded Diffraction Phase Retrieval}

\author{\IEEEauthorblockN{Zhiyuan Hu\IEEEauthorrefmark{1}, YoonSeok Baek\IEEEauthorrefmark{2}, Sylvain Gigan\IEEEauthorrefmark{2}, Michael Unser\IEEEauthorrefmark{1}, Jonathan Dong\IEEEauthorrefmark{1}} \\
\IEEEauthorblockA{\IEEEauthorrefmark{1}Biomedical Imaging Group, \'Ecole polytechnique f\'ed\'erale de Lausanne, Lausanne, Switzerland} \\ 
\IEEEauthorblockA{\IEEEauthorrefmark{2}Laboratoire Kastler Brossel, ENS-Universit\'e PSL, CNRS, Sorbonne Universit\'e, Coll\`ege de France, Paris, France}
}

\begin{document}

\maketitle

\begin{abstract}
Phase retrieval recovers complex-valued objects from intensity-only measurements.
In its classical Fourier form, reconstruction suffers from twin-image stagnation.
Coded diffraction imaging, which adds a random phase mask before the Fourier transform, has so far required several masks.
We show that an image prior makes single-shot coded diffraction imaging possible, with a single random mask followed by a Fourier transform.
We validate this claim in simulation and in an optical experiment, in which a spatial light modulator displays the phase object together with the random mask.
Our experimental validation covers five test signals and 530 images from the public Brno quantitative phase microscopy dataset of five adherent cell lines.
With no image prior or with Fourier-only measurements, reconstruction of complex-valued objects from a single diffraction pattern consistently fails.
An out-of-the-box pretrained image denoiser enables accurate reconstruction, in simulation even with fewer measurements than unknowns.
Five regularization frameworks built on the same denoiser and one learned weakly convex regularizer all reconstruct the experimental images with peak signal-to-noise ratios within one decibel of each other. 
A single random phase mask and image priors thus turn coded diffraction imaging into a practical single-shot modality for quantitative phase imaging.
\end{abstract}

\begin{IEEEkeywords}
Phase retrieval, coded diffraction pattern, image prior, learned regularization, spatial light modulator.
\end{IEEEkeywords}

\section{Introduction}
Phase retrieval refers to reconstructing a complex-valued signal from intensity measurements obtained after a linear transformation. 
In imaging, this corresponds to recovering the object-plane field from detector-plane intensity measurements. 
It arises in a wide range of domains, including crystallography~\cite{harrison1993phase}, astronomy~\cite{dainty1987phase}, microscopy~\cite{gerchberg1972practical} and optical metrology~\cite{fienup1993hubble}.
Phase is often the quantity of interest: weakly absorbing specimens such as unstained cells are almost transparent in intensity images, whereas their phase delays reveal high-contrast structures~\cite{park2018quantitative,marquet2005digital}.
Moreover, phase imaging requires no exogenous labels and is thus free of photobleaching and largely free of phototoxicity, which permits long-term observation of living samples~\cite{park2018quantitative,zangle2014live}.
These benefits come at the cost of an inverse problem to solve, whose difficulty depends on the linear operator at hand.


The structure of the linear operator, which maps the object to the detector plane, is set by the imaging modality.
The first and extensively studied case is Fourier phase retrieval (FPR), also referred to as coherent diffraction imaging, in which the operator is a two-dimensional Fourier transform.
A rich family of algorithms has been developed for this setup, including the Gerchberg-Saxton algorithm, error reduction, hybrid input-output, and relaxed averaged alternating reflections~\cite{gerchberg1972practical,fienup1982phase,luke2004relaxed}. Many of them are framed using alternating projections between the measurement set and a constraint set.
In practice, however, Fourier phase retrieval remains ill-posed and requires strong prior constraints, such as a known support and nonnegativity, which preclude recovery of arbitrary complex-valued objects. 
Moreover, its algorithms are subject to the well-known twin-image stagnation problem~\cite{fienup1986phase}.
To facilitate reconstruction with more measurements, a common scheme is to shift the position or the angle of the illumination between exposures, leading to ptychography and Fourier ptychography, respectively~\cite{rodenburg2004phase,zheng2013wide}. Another approach places random phase masks just after the object and records the resulting coded diffraction patterns (CDPs)~\cite{candes2015phasea}.
This model is a standard setting for phase retrieval with theoretical guarantees, with which convex relaxations and gradient-based solvers recover the signal exactly, up to a global phase~\cite{candes2015phasea,candes2015phase,gross2017improved}.
Existing works on coded diffraction imaging, however, bear two limitations.
First, every existing guarantee requires multiple masks, at least three in practice~\cite{seaberg2015coherent}, corresponding to multiple exposures of the same object.
Second, experimental realizations of coded diffraction imaging remain scarce. They either acquire several patterns with different masks~\cite{seaberg2015coherent,katkovnik2017computational,wu2019wish} or use a Fresnel operator instead of a Fourier transform~\cite{horisaki2016single,gao2026singleshot}.

Phase retrieval with a dense random sensing matrix, whose entries are i.i.d. Gaussian, is the standard configuration for theoretical studies. Exact thresholds for recovery have been derived and validated~\cite{mondelli2019fundamental,luo2019optimal,maillard2020phase}. However, such matrices, which correspond to scattering in a complex medium, bear a prohibitive quadratic computational complexity in the number of pixels.
Structured random phase retrieval (SRPR)~\cite{hu2026structured} has been recently proposed to bridge the gap between real-world Fourier-based modalities and i.i.d. random ones.
Its forward model is a cascade of Fourier transforms and random diagonal matrices, and can directly be implemented using lenses and diffusers.
With two or more layers, the model enables the same robust reconstruction quality as using a dense random sensing matrix, with computational cost reduced from quadratic to log-linear in the number of pixels~\cite{hu2026structured}.
The one-layer case is exactly phase retrieval from a single CDP.
Without an image prior, its reconstruction by spectral methods and gradient-based solvers fails regardless of the number of measurements~\cite{hu2026structured}.

Reducing the number of layers needed for a successful reconstruction is nevertheless of great practical interest, as every reduced layer simplifies the optical hardware. The one-layer system requires only a single mask and a single lens.
We have shown that, in the two-layer model, regularization already enables reconstruction from far fewer measurements~\cite{ducotterd2025breaking}. 
The effect of regularization on the one-layer case, however, has yet to be investigated.

In this paper, we study the effect of regularization on FPR and on phase retrieval from a single CDP, using pretrained image denoisers as prior.
Our contributions are the following:
\begin{itemize}
    \item We show, both numerically and experimentally, that machine-learning regularization makes phase retrieval from a single CDP possible. With a pretrained denoiser as the prior, a single pattern yields an accurate and robust reconstruction of the full complex-valued object, whereas every prior-free algorithm fails at any sampling ratio.
    \item We experimentally demonstrate the approach with an optical system in which a spatial light modulator (SLM) encodes both the object phase and the random mask.
    \item We benchmark six machine-learning regularization frameworks (plug-and-play with half-quadratic splitting or proximal gradient steps, regularization by denoising, a score-based prior, the least squares residual regularizer and a learned weakly convex regularizer) on a large dataset of 530 quantitative phase images from five different cell lines.
    \item We investigate two strategies to regularize a complex-valued signal with a real-valued denoiser, the real/imaginary and the magnitude/phase splitting schemes, and adopt the former.
    \item We show that the same prior fails for Fourier phase retrieval of complex-valued objects.
    Together with the two-layer results of~\cite{ducotterd2025breaking}, this establishes single-shot coded diffraction imaging as the simplest structured random model that benefits from regularization.
\end{itemize}

\section{Background}
\subsection{Forward Model}
Phase retrieval reconstructs a complex-valued signal $\x \in \C^{n}$ from intensity measurements $\y \in \R^{m}$ of the form
\begin{equation} \label{eq:pr-forward}
    \y = |\A\x|^{2} + \n,
\end{equation}
where the sensing matrix $\A \in \C^{m \times n}$ models the measurement physics, and $\n \in \R^{m}$ is the measurement noise; $|\cdot|$ denotes the element-wise modulus.
The sampling ratio $\alpha = m/n$ is the number of measurements per unknown, and sets the difficulty of reconstruction. Reconstructing the signal $\x$ is always up to a global phase shift $\phi \in [-\pi, \pi)$, as $|\A \x|^2 = |\A(\x \e^{\j \phi})|^2$.

SRPR~\cite{hu2026structured} defines the sensing matrix as
\begin{equation} \label{eq:srpr}
    \A = \mathbf{S} \Bigl[\prod_{i=1}^{N} \F \D_{i} \Bigr]\, \F^{s} \O,
\end{equation}
where
\begin{itemize}
    \item $\F \in \C^{p \times p}$ denotes the discrete Fourier transform (DFT),
    \item each $\D_{i} \in \C^{p \times p}$ is a random diagonal matrix,
    \item $\F^{s}$ with $s \in \{0, 1\}$ is either the DFT or the identity,
    \item $\O \in \R^{p \times n}$ pads the $n$-pixel signal to size $p \geq n$, with zeros unless stated otherwise.
    $\mathbf{S} \in \R^{m \times p}$ optionally keeps $m \leq p$ of the output coefficients, so that $p = \max(m, n)$ as in~\cite{hu2026structured}.
\end{itemize}
Each product $\F\D_{i}$ is one \emph{structured random layer}; the trailing $\F^{s}$ counts as half a layer. We thus define the $\textit{depth}$ of the model as $L = N + s/2$.
In this paper, we mainly focus on the case where $m \geq n$, assuming the undersampling operator $\mathbf{S} = \mathbf{I}$ unless specified otherwise.

When $L = 0.5$, the sensing matrix $\A = \F\O$ corresponds to oversampled FPR.
Besides the irreducible global phase ambiguity, FPR suffers from two other ambiguities: 1. a cyclic translation $\x[\k - \k_0]$ of the signal $\x[\k]$, with $\k$ denoting the pixel index and $\k_0$ the shift bias; 2. the conjugate inversion $\conj{\x}[-\k]$, that is, the complex conjugate of $\x[\k]$ point-reflected through the image center, commonly referred to as the \emph{twin image}.
Any combination of the three ambiguities produces identical measurements.
The uniqueness of the solution up to ambiguities has been proved if the signal is at least two-dimensional and oversampled twice for each dimension~\cite{hayes1982reconstruction}.
In practice, however, the ambiguities generate a landscape of spurious local minima, and reconstruction algorithms stagnate even when uniqueness holds.

On the other hand, CDPs are obtained by modulating the signal with random masks before the Fourier transform, leading to the sensing matrix
\begin{equation} \label{eq:coded-diff}
    \A = \begin{bmatrix} \F\D_{1} \\ \vdots \\ \F\D_{k} \end{bmatrix} \in \C^{kn \times n},
\end{equation}
where each $\D_{i} \in \C^{n \times n}$ is a random diagonal matrix and $k$ is the number of masks.
Convergence guarantees up to a global phase have been established for convex relaxations and gradient-based algorithms when the number of masks $k$ grows polylogarithmically with $n$~\cite{candes2015phasea,candes2015phase,gross2017improved}, and this model serves as a standard benchmark for plug-and-play~\cite{song2022fast} and diffusion-based~\cite{shoushtari2023diffusion} solvers.
The single-shot version, $k = 1$, is exactly the one-layer SRPR model $\A = \F\D\O$.

When $\A$ has i.i.d. complex Gaussian entries, the phase transitions of phase retrieval are known exactly. Weak recovery means an estimate positively correlated with the signal. Without a prior, it is allowed by information theory only for $\alpha > 1$~\cite{mondelli2019fundamental}. 
This is possible in practice for example with spectral methods~\cite{mondelli2019fundamental,luo2019optimal}. Perfect recovery in the noiseless case requires $\alpha > 2$~\cite{maillard2020phase}. Approximate message passing achieves perfect recovery for $\alpha$ above approximately 2.03, slightly above the previous threshold~\cite{maillard2020phase}.
With two or more layers, SRPR empirically matches the performance of dense random matrices at a log-linear cost~\cite{hu2026structured}. Regularization on the two-layer model breaks the weak-recovery threshold~\cite{ducotterd2025breaking}.
In contrast, the one-layer model fails without regularization at any sampling ratio.

\subsection{Reconstruction Algorithms} \label{sec:algorithms}
Modern reconstruction algorithms for phase retrieval are commonly formulated as a variational problem
\begin{equation} \label{eq:variational}
    \min_{\x}\; \fid(\A, \x, \y) + \lambda \reg(\x),
\end{equation}
where the data-fidelity term $\fid$ measures the consistency of the estimate with the measurements, and the regularization term $\reg$ encodes prior knowledge about the signal, weighted by the regularization strength $\lambda \geq 0$. In the Bayesian view, \cref{eq:variational} is maximum a posteriori estimation, with $\fid$ the negative log-likelihood $-\log p(\y | \x)$ and $\lambda \reg$ the negative log-prior $-\log p(\x)$.

Two data-fidelity terms are commonly used. The intensity loss $\frac{1}{2}\||\A\x|^{2} - \y\|^{2}$ is the negative log-likelihood under Gaussian noise on the intensities. The amplitude loss $\frac{1}{2}\||\A\x| - \sqrt{\y}\|^{2}$ approximates the negative log-likelihood under Poisson noise, and it has been found more robust to noise in practice~\cite{yeh2015experimental}.

A basic way to enforce consistency with the measurements is the measurement projection
\begin{equation} \label{eq:gs}
    \x_{t+1} = \A\HH (\sqrt{\y} \odot \e^{\j \angle \A\x_t}),
\end{equation}
which imposes the measured amplitudes and keeps the current phases. Here $(\cdot)\HH$ denotes the conjugate transpose, $\odot$ the elementwise product and $\angle(\cdot)$ the elementwise argument of a complex vector.
The proximal operator of a function $g$ with parameter $\tau > 0$ is $\operatorname{prox}_{\tau g}(\x) = \arg\min_{\mathbf{z}} \tau g(\mathbf{z}) + \frac{1}{2}\|\mathbf{z} - \x\|^{2}$~\cite{parikh2014proximal}.
For a unitary $\A$, \cref{eq:gs} is the projection onto the measurement set $\{\x : |\A\x|^2 = \y \}$, that is, the proximal operator of its indicator function.
For a general $\A$, with the pseudo-inverse in place of $\A\HH$, it is the amplitude projection in the measurement domain followed by a least-squares return to the object domain.
For the zero-padded models of this paper, it is one error-reduction step~\cite{fienup1982phase}.
The Gerchberg-Saxton algorithm is the special case where the sensing matrix is the Fourier transform, and it alternates this projection with the projection onto the known object amplitude~\cite{gerchberg1972practical}.

A relaxed version of \cref{eq:gs} with a step size $\mu \in (0, 1]$ was proposed in~\cite{ducotterd2025breaking},
\begin{equation} \label{eq:relaxed_measurement_projection}
   \x_{t+1} = \x_t + \mu\, \A\HH \bigl(\sqrt{\y} \odot \e^{\j \angle \A\x_t} - \A\x_t\bigr).
\end{equation}
For any $\A$, this is one gradient step of size $\mu$ on the amplitude loss, and $\mu = 1$ recovers \cref{eq:gs} when $\A\HH\A = \mathbf{I}$.
Below a sampling ratio of 1, $\A\HH\A$ is only a projection, and the step with $\mu = 1$ keeps the unmeasured component $(\mathbf{I} - \A\HH\A)\x_t$, which \cref{eq:gs} sets to zero.
For a unitary $\A$, it is also the proximal operator of this loss with parameter $\mu / (1 - \mu)$~\cite{ducotterd2025breaking}.

In recent years, learned regularization terms have greatly improved reconstruction quality in inverse problems. This generally consists in training an image denoiser $D_{\sigma}(\cdot)$, which predicts the clean image $\x$ from its noise-corrupted observation $\x + \n$ at noise level $\sigma$.
A denoiser trained with the mean-squared-error (MSE) loss on clean-noisy image pairs approximates the minimum-mean-squared-error (MMSE) estimator $\mathbb{E}[\x | \x + \n]$~\cite{kadkhodaie2021stochastic}.

\emph{Plug-and-play} (PnP) methods start from the observation that in splitting algorithms such as half-quadratic splitting (HQS) and the alternating direction method of multipliers (ADMM), the regularization term acts only through its proximal operator, which can be treated as a Gaussian denoising step at a noise level set by the splitting parameter~\cite{venkatakrishnan2013plug}.
The motivation is to bypass the explicit design of $\reg$ by replacing its proximal operator with a learned denoiser, leading to the regularization step
\begin{equation} \label{eq:pnp}
    \x_{t+1} = D_{\sigma}(\x_t).
\end{equation}
Combined with a proximal step or with a gradient step on the data-fidelity term, this drop-in replacement yields PnP-HQS~\cite{zhang2021plug} and PnP proximal gradient descent (PnP-PGD)~\cite{ryu2019plug}, respectively.
The ideal MMSE estimator indeed equals the proximal operator of an implicit nonconvex regularization term, which differs from the negative log-prior $-\log p(\x)$~\cite{gribonval2011map}.
The employed trained network, however, is only an approximation of the MMSE estimator and violates the condition of being a nonexpansive gradient field to equal a proximal operator~\cite{gribonval2020characterization}.
Convergence guarantees for generic trained denoisers are thus weaker. If the denoising residual $D_{\sigma} - \mathbf{I}$ is Lipschitz with a sufficiently small constant and the data term is strongly convex, the PnP iteration converges to a unique point, which is however not the minimizer of an explicit objective~\cite{ryu2019plug}.

\emph{Regularization by denoising} (RED) takes the opposite route and keeps the regularization term explicit~\cite{romano2017little}.
It defines the regularization term from the denoiser itself, $\reg(\x) = \frac{1}{2}\inner{\x}{\x - D_{\sigma}(\x)}$.
Under Jacobian symmetry and local homogeneity~\cite{romano2017little}, the gradient of $\reg$ simplifies to the denoising residual $\nabla\reg(\x) = \x - D_{\sigma}(\x)$.
The regularization step is thus a standard gradient descent of form
\begin{equation} \label{eq:red}
    \x_{t+1} = \x_t + \lambda (D_{\sigma}(\x_t) - \x_t),
\end{equation}
with constant $\lambda > 0$. When $\lambda = 1$, it recovers the PnP update. In practice, trained networks do not satisfy these conditions, so $\x - D_{\sigma}(\x)$ is not the gradient of any explicit regularization term, and the RED update is a heuristic iteration whose behavior is set by the denoiser rather than by an objective~\cite{reehorst2019red}.

Gradient-step denoisers are motivated by making the gradient of the learned regularization term exact rather than assumed~\cite{hurault2022gradient}.
Instead of training the network as the denoiser, this approach trains an auxiliary network $N_{\sigma}$ and defines the denoiser as one gradient step on the \emph{least squares residual} (LSR) $\reg(\x) = \frac{1}{2}\|\x - N_{\sigma}(\x)\|^{2}$~\cite{zou2023deep,hertrich2026learning},
\begin{equation} \label{eq:grad-denoiser}
    \x_{t+1} = \x_t - \nabla_\x \reg(\x_t),
\end{equation}
which also serves as the regularization update of the reconstruction.
The network $N_\sigma$ is then trained by learning \cref{eq:grad-denoiser} as a denoiser through second-order autodifferentiation~\cite{hurault2022gradient}. The gradient has a closed form
\begin{equation} \label{eq:gs-denoiser}
    \nabla_\x \reg(\x) = \x - N_{\sigma}(\x) - \jac_{N_{\sigma}}(\x)\TT \bigl(\x - N_{\sigma}(\x)\bigr),
\end{equation}
which requires an extra backward pass for each evaluation, increasing computational complexity.
In exchange, the gradient structure by construction allows provable descent of the explicit functional $\reg$.

In recent years, diffusion- and flow-based models have demonstrated impressive results in image generation. A major approach is to learn the score of the noisy data distribution $\nabla_\x \log p_\sigma(\x)$ to allow sampling from $p(\x)$.
By Tweedie's formula, the MMSE estimators and the score are interchangeable via
\begin{equation}
    D_{\sigma}(\x) = \x + \sigma^{2}\,\nabla \log p_{\sigma}(\x).
\end{equation}
The \emph{ScorePrior} approach thus constructs the regularization as gradient ascent on $\log p_{\sigma_{t}}$:
\begin{equation} \label{eq:score}
    \x_{t+1} = \x_t + \gamma_t \frac{D_{\sigma_t}(\x_t) - \x_t}{\sigma_t^2}
\end{equation}
with step size $\gamma_t$ and noise level $\sigma_t$ both annealed to $0$ throughout the iterations, similar to the scenario of unconditional generation.

Another route keeps the regularization term explicit and constrains its structure, so that the reconstruction algorithm provably converges.
Convex ridge regularizers learn a sum of one-dimensional convex potentials applied to filtered images~\cite{goujon2023neural}.
Their weakly convex extension relaxes the potentials to weakly convex ones, which keeps the convergence of proximal gradient descent to a critical point and sharpens the prior~\cite{goujon2024learning}.
Multivariate fields of experts (MFoE) replace the one-dimensional potentials by multivariate ones built from Moreau envelopes of the $\ell_2$ norm~\cite{ducotterd2025multivariate}.
The term $\reg_{\sigma}$ depends on the noise level $\sigma$ and is trained for Gaussian denoising.
The regularization step is one gradient step
\begin{equation} \label{eq:mfoe}
    \x_{t+1} = \x_t - \frac{1}{L} \nabla_\x \reg_{\sigma}(\x_t),
\end{equation}
with $L$ the Lipschitz constant of $\nabla \reg_{\sigma}$, so that the iteration is a proximal gradient descent on an explicit objective.
The MFoE term has fewer than $2 \times 10^{4}$ parameters.
In the two-layer SRPR model, it recovers the signal from fewer measurements than DRUNet~\cite{ducotterd2025breaking}.

Without regularization, the baselines are largely gradient- or projection-based iterations on the data term, such as Wirtinger flow~\cite{candes2015phase} and its truncated variants~\cite{chen2015solving,wang2018solving}.

\subsection{Related Works}
\emph{Constraints in Fourier phase retrieval.} A support constraint removes the translation ambiguity~\cite{fienup1982phase,luke2004relaxed}, an asymmetric support also removes the twin image~\cite{fienup1987reconstruction}, and value constraints such as realness~\cite{shoushtari2023diffusion,li2024difffpr,zhang2025improving} or nonnegativity~\cite{fienup1982phase,luke2004relaxed,tian2023phase} restrict the search space further, while the global phase is fixed by anchoring one pixel~\cite{zhang2024whats}.
Removing the ambiguities does not guarantee convergence. The strongest result to date shows that a continuous hybrid input--output algorithm with nonnegativity almost always converges to the true signal or its twin~\cite{tian2023phase}.
Reliable Fourier phase retrieval thus keeps relying on strong constraints that complex-valued objects do not satisfy.

\emph{Image priors for phase retrieval.} Learned priors have been applied to phase retrieval through plug-and-play~\cite{song2022fast} and diffusion models~\cite{shoushtari2023diffusion,li2024difffpr,zhang2025improving}.
These works either use multiple CDPs or assume a real-valued signal.
Furthermore, they are evaluated on simulated measurements of natural images.
To our knowledge, phase retrieval from a single CDP of a complex-valued object has not been reported, nor has any of these methods been validated on experimental data.

\emph{Single-shot phase imaging with random masks.} Random-mask holography, single-shot ptychography and lensless imaging realize single-shot acquisition through a random modulation of the object wave.
SPIRaL~\cite{horisaki2016single} retrieves a complex-valued object from one exposure under randomized illumination with a hand-crafted sparsity prior, and the recent setup of~\cite{gao2026singleshot} is closest to our one-layer model. Both mix the signal through a Fresnel propagation rather than a Fourier transform.

\section{Methods}

\subsection{Optical Setup} \label{sec:experimental_setup}

\begin{figure}[!t]
  \centering
  \includegraphics[width=\columnwidth]{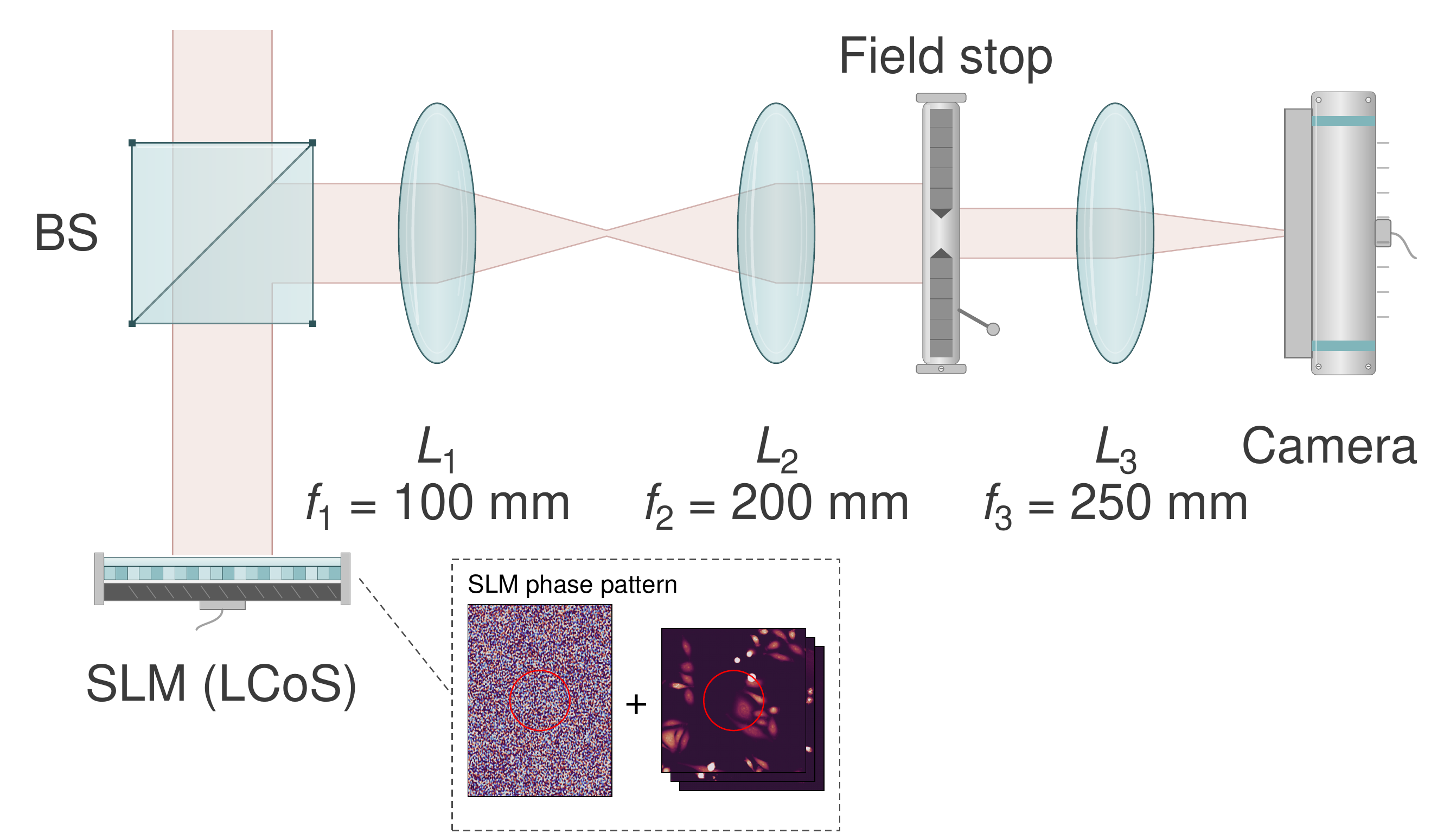}
  \caption{Optical setup implementing the 1-layer SRPR model. A collimated laser beam ($\lambda = 532$~nm) illuminates a reflective phase-only SLM (Hamamatsu, X10468-04) through a beamsplitter. The SLM plane is imaged by a $4f$ relay ($f_1 = 100$~mm, $f_2 = 200$~mm) onto an iris acting as a field stop; lens $L_3$ ($f_3 = 250$~mm) performs an optical Fourier transform and the sCMOS camera (pco.edge 4.2) in its back focal plane records the intensity. The inset shows the phase pattern displayed on the SLM, the sum of the diffuser pattern and the object phase, here frames of the Brno dataset. The red circles mark the illuminated disk of radius 125 pixels.}
  \label{fig:setup}
\end{figure}

To demonstrate reconstruction from real measurements, we built an optical system that implements the 1-layer SRPR model.
The optical setup is shown in \cref{fig:setup}.
A collimated continuous-wave laser beam at wavelength $\lambda = 532$~nm illuminates a reflective phase-only liquid-crystal spatial light modulator (SLM, Hamamatsu X10468-04, $800\times600$ pixels, pixel pitch $20~\mu$m).
The SLM displays the sum of the phase pattern of the random diffuser $\D$ and the signal of interest $\x$, the latter being added to the central $600\times600$ pixels.
The SLM plane is imaged with a $4f$ relay (focal lengths $f_1 = 100$~mm and $f_2 = 200$~mm, magnification $f_2/f_1 = 2$) onto an iris, which acts as a field stop and restricts the field to a disk of radius approximately $125$ SLM pixels centered on the SLM.
A third lens ($f_3 = 250$~mm) placed one focal length after the iris performs an optical Fourier transform, and an sCMOS camera (PCO edge 4.2, pixel size $6.5~\mu$m) in its back focal plane records the intensity of the Fourier transform of the modulated field.
Three additional captures without the object calibrate the alignment of the camera plane and the instrument phase of \cref{sec:recon-algorithm}.
They display a constant phase, a linear phase ramp and the bare diffuser $\D$.
The design of the diffuser pattern displayed on the SLM is described in \cref{sec:diffuser-design}.

For the test signals, we take 5 first signals for visual validation and the Brno QPM dataset~\cite{vicar2021annotated,vicar2021self}, which contains 530 images from 5 different cell lines obtained by a coherence-controlled holographic microscope to construct the phase $\phi_{\mathrm{obj}}$ of the object $\x = \e^{\j \phi_\mathrm{obj}}$. The pixel values are linearly remapped from $[0, 1]$ to $[0, \pi]$.

The camera captures the diffraction pattern of the combined SLM pattern on a $512 \times 512$ pixel grid in the Fraunhofer regime, and the unknowns of the reconstruction are the pixels of the illuminated disk of radius 125.
The nominal sampling ratio, i.e., the number of camera pixels over the number of unknowns, is thus $512^2 / (\pi \cdot 125^2) \approx 5.3$.
However, the measurements are distributed highly non-uniformly.
\Cref{fig:measurements} compares the calibrated forward model with a captured frame.
The intensity concentrates in a central ellipse of semi-axes $64 \times 48$ pixels, which carries 90\% of the energy above the noise floor.
A secondary halo of semi-axes $160 \times 120$ pixels surrounds and carries the remaining 10\%.
The aspect ratio $H/W = 4/3$ follows from the band-limit disk of the speckle pattern being defined on the $H \times W$ DFT grid of the SLM frame (\cref{sec:diffuser-design}).
Outside the halo, the intensity is at the noise floor of the camera and is removed by the noise subtraction of the reconstruction.
Counting only the pixels inside the halo, the effective sampling ratio is $(160 \times 120) / 125^{2} \approx 1.23$, slightly above the weak-recovery threshold of Gaussian random models.

\subsection{Designing the Diffuser Pattern} \label{sec:diffuser-design}
To generate the random diffuser pattern, we first assign independent phase values drawn uniformly from $(-\pi, \pi]$ to each superpixel of a given size in the SLM.
This introduces discrete changes at the boundaries of superpixels, which is unfavorable due to the display crosstalk of adjacent pixels in the SLM.
To enforce smoothness in the diffuser pattern, we apply the Fourier transform on band-limited random frequencies, i.e.,
\begin{equation}
    \D = \e^{\j \angle(\F^{-1}\mathbf{S}\F \e^{\j \boldsymbol{\Psi}})},
\end{equation}
where $\F$ is the 2D DFT and $\boldsymbol{\Psi} \in \R^{H \times W}$ is uniformly sampled from $(-\pi, \pi]$.
The mask $\mathbf{S}$ is a centered disk of radius $r$ samples on the DFT grid of the $H \times W$ frame, which cuts off the high frequencies in the spectrum of $\e^{\j \boldsymbol{\Psi}}$.
One sample of this grid is $1/W$ cycles per pixel along the side of $W$ pixels and $1/H$ cycles per pixel along the side of $H$ pixels, so the two cutoff frequencies $r/W$ and $r/H$ differ by the factor $H/W$.
This produces smooth speckle grains in $\D$ whose size is set by $r$.
We define the overall speckle grain size as
\begin{equation}
    g = \sqrt{g_x g_y}, \qquad g_x = \frac{W}{2r}, \quad g_y = \frac{H}{2r},
    \label{eq:grain-size}
\end{equation}
where $g_x$ and $g_y$ represent the speckle grain sizes per axis, i.e. the half periods of the two cutoff frequencies in pixels.
Larger superpixels or speckle grains alleviate the crosstalk of the SLM, but also introduce less randomness into the system.
We design three levels of speckle sizes, $r = 125$, $75$ and $47$, i.e. $g = 2.8$, $4.6$ and $7.4$ pixels, and visualize them in \cref{fig:diffuser-patterns}, together with three levels of i.i.d. patterns with superpixel sizes of 3, 5 and 8 pixels, respectively.

We compare the six patterns of \cref{fig:diffuser-patterns} with the Shepp-Logan phantom~\cite{shepp1974fourier}, remapped to phases in $[0, \pi/2]$ and captured with every pattern.
The reconstruction runs PnP-HQS (\cref{alg:unified}) with $\sigma_0 = 3.0$~rad and $\rho=0.9995$ for 1900 iterations.
\Cref{fig:pattern-choice} shows the Shepp-Logan phantom ground truth together with the reconstructions.
It can be seen that the speckle pattern with $g = 4.6$ is the only pattern where the closed outline is fully recovered.
We thus use this configuration for later captures with more signals.

\subsection{Reconstruction Algorithm} \label{sec:recon-algorithm}
The forward operator is modeled on the $512 \times 512$ camera grid as $\y = |\F \D \S \e^{\j \boldsymbol{\phi}_\mathrm{cali}} \x|^{2}$.
Here $\D$ is the known random mask, $\S$ the support of the illuminated disk and $\boldsymbol{\phi}_\mathrm{cali} \in [-\pi, \pi)^{512 \times 512}$ the instrument phase obtained from calibration.
The DFT and its adjoint, the inverse DFT, are evaluated on the same grid.
The support radius is 130 pixels instead of the 125 pixels of the iris, a margin of 5 pixels that keeps the hard support edge away from the illumination edge.

Before reconstruction, we locate the zero order on the capture of the constant phase and refine the alignment of the camera plane by cross-correlation with the simulated capture of the bare diffuser.
On the same capture, we fit the instrument phase $\boldsymbol{\phi}_\mathrm{cali} = a r^{2} + b r^{4} + g$, with $r$ the distance to the grid center and $g$ a small Gaussian bump.
The capture of the ramp serves as a check of the alignment.
Each measurement is then scaled to the norm of the simulated measurement of a flat object.

All frameworks of \Cref{sec:algorithms} run the same iteration, summarized in \Cref{alg:unified}.
The simulations in \cref{sec:results} use the PnP-HQS instance of \cref{alg:unified}.
Their test signal, initialization and schedule are stated with each experiment.
\begin{algorithm}[t]
\caption{Regularized reconstruction from a single CDP}
\label{alg:unified}
\begin{algorithmic}
\REQUIRE measurement $\y$, operator $\A = \F\D\e^{\j\boldsymbol{\phi}_\mathrm{cali}}$, support $\S$, denoiser $D_{\sigma}$, step size $\mu$, schedule $(\sigma_t, \lambda_t)_{t < T}$
\STATE $\x_0 \sim \mathcal{CN}(\mathbf{0}, \mathbf{I})$
\FOR{$t = 0, \dots, T - 1$}
    \STATE $\x' \leftarrow \x_t + \mu\, \A\HH \bigl(\sqrt{\y} \odot \e^{\j \angle \A\x_t} - \A\x_t\bigr)$ \COMMENT{data step, \cref{eq:relaxed_measurement_projection}}
    \STATE $\x'' \leftarrow \x' + \lambda_t \bigl(D_{\sigma_t}(\x') - \x'\bigr)$ \COMMENT{prior move, \cref{eq:prior-move}}
    \STATE $\x_{t+1} \leftarrow \S \x''$ \COMMENT{support projection}
\ENDFOR
\end{algorithmic}
\end{algorithm}
Each iteration applies three steps to the current iterate $\x_t$.
The data step is the relaxed measurement projection of \cref{eq:relaxed_measurement_projection} with $\A = \F\D\e^{\j\boldsymbol{\phi}_\mathrm{cali}}$ and step size $\mu$.
This $\A$ is unitary on the $512 \times 512$ grid, so the projection and proximal statements of \Cref{sec:algorithms} apply.
The prior move is
\begin{equation} \label{eq:prior-move}
    \x'' = \x' + \lambda_t \bigl(D_{\sigma_t}(\x') - \x'\bigr),
\end{equation}
with the noise level $\sigma_t = \sigma_0 \rho^{t}$ and the weight $\lambda_t = \eta\,(\sigma_t / \sigma_0)^{p}$.
The support step is the hard projection $\x_{t+1} = \S\x''$ onto the illuminated disk.
The frameworks differ in $\mu$, $\lambda_t$ and the denoiser, as listed in \Cref{tab:frameworks-params}.
Before reconstruction, a constant noise level, set per framework, is subtracted from the captured intensity, and negative values are set to zero.
\begin{table}[t]
\centering
\caption{Instances of \Cref{alg:unified}. $\mu$ is the step size of the data step and $\lambda_t$ the weight of the prior move.}
\label{tab:frameworks-params}
\begin{tabular}{lccl}\hline
Framework & $\mu$ & $\lambda_t$ & Denoiser $D_{\sigma}$ \\ \hline
PnP-HQS & 1 & 1 & DRUNet \\
PnP-PGD & 0.75 & 1 & DRUNet \\
RED & 1 & 0.97 & DRUNet \\
ScorePrior & 1 & $(\sigma_t / \sigma_0)^{1/4}$ & DRUNet \\
LSR & 1 & 1 & $\mathbf{I} - \nabla \reg$, gradient-step DRUNet \\
MFoE & 1 & 1 & $(\mathbf{I} - \frac{2}{L} \nabla \reg_{\sigma})^{30}$ \\ 
Without prior & 1 & 0 & none \\ \hline
\end{tabular}
\end{table}
With $\mu = 1$ and $\lambda_t = 1$, the iteration alternates the exact projection with the denoiser and is PnP-HQS, \cref{eq:pnp}.
With $\mu < 1$, the data step becomes a gradient step on the amplitude loss and the iteration is PnP-PGD.
With a constant $\lambda_t = \tau < 1$, the prior move is the RED step of \cref{eq:red}.
With $\lambda_t = (\sigma_t / \sigma_0)^{p}$ decaying with the noise level, the prior move is the annealed score ascent of \cref{eq:score}, where $\lambda_t$ plays the role of $\gamma_t / \sigma_t^{2}$.
With the gradient-step denoiser $D_{\sigma} = \mathbf{I} - \nabla \reg$ of \cref{eq:grad-denoiser} and $\lambda_t = 1$, the prior move is the LSR step.
The MFoE framework replaces the denoiser by gradient steps on the weakly convex regularization term $\reg_{\sigma}$ of~\cite{ducotterd2025multivariate}, which is trained for Gaussian denoising at the noise level $\sigma$ and applied here at the level $\sigma_t$ of the schedule.
One gradient step of size $1 / L$ per iteration, \cref{eq:mfoe}, is the proximal-gradient iteration of~\cite{ducotterd2025breaking}.
On the experimental data this single step never leaves the random initialization, so the prior move takes 30 gradient steps of size $2 / L$ per iteration, the most accurate setting we tested.
These 30 gradient steps take about 25 times as long as one DRUNet step in the runs of \Cref{tab:frameworks}.
With $\lambda_t = 0$, the prior is removed and the iteration reduces to error reduction with a support constraint, the baseline of the last row of \Cref{tab:frameworks}.

The denoiser is the DRUNet~\cite{zhang2021plug} with the pretrained grayscale weights distributed in the DeepInv package~\cite{tachella2025deepinverse}, and its gradient-step variant~\cite{hurault2022gradient} for LSR.
The MFoE term uses the weights published by its authors.
No training or fine-tuning is performed.
The complex iterate enters the network as two real images, its real and imaginary parts, and the output is recombined, as detailed in \Cref{sec:complex-denoising}.
On the experimental data, the network receives the noise level $\sigma_t / (2\pi)$, floored at $5 \times 10^{-3}$.

The schedule is geometric with $\sigma_0 = 2.0$~rad and $\rho = 0.9995$ over 1900 iterations, 3000 for PnP-PGD.
For MFoE, the schedule starts at $\sigma_0 = 12.6$~rad, and the real and imaginary parts are mapped by $v \mapsto v / 2 + 1/2$ instead of the map of \Cref{sec:complex-denoising}.
The iterate starts from a complex Gaussian random field.
The reported reconstruction is the checkpoint, taken every 150 iterations, with the lowest phase RMSE against the ground truth on the support of radius 130 pixels.
This oracle readout upper-bounds the accuracy of any stopping rule.
A stopping rule without ground truth is one global stop iteration, tuned on a development subset of 40 frames and applied to every frame.
For RED, the stop at iteration 1200 gives an RMSE of 0.2186~rad and a PSNR of 23.47~dB, 0.008~rad above the oracle readout.
Across three random initializations, this RMSE changes by at most 0.001~rad.
\subsection{Generalizing Real-Valued Denoisers to Complex Values} \label{sec:complex-denoising}
Most existing denoisers are trained on real-valued grayscale images, whereas the phase retrieval iterate operates on a complex field with two real degrees of freedom per pixel.
The denoiser must therefore act on a real representation of $\x$.
Two natural choices exist.
The real-imaginary splitting denoises $\x_{\real}$ and $\x_{\imag}$ separately and recombines them.
The magnitude-phase splitting denoises the modulus $\abs$ and the phase $\ang$ of $\x = \abs \odot \e^{\j \ang}$ separately.
For our objects, $\x = \e^{\j \boldsymbol{\phi}}$ with a smooth phase, the parts $\cos \boldsymbol{\phi}$ and $\sin \boldsymbol{\phi}$ are smooth images wherever the phase is, so a natural-image prior applies to them directly.
On the experimental data, both parts are mapped into $[0, 1]$ by $v \mapsto v / 8 + 1/2$ before the network, and the output is mapped back.
The factor keeps the parts of a unit-modulus field well inside the training range of the network.

If the denoiser is the proximal operator of a regularization term $\reg$ on real-valued images~\cite{gribonval2011map}, the two splittings have different meanings.
The real-imaginary splitting is exactly the proximal operator of $\reg(\x_{\real}) + \reg(\x_{\imag})$, so the plug-and-play interpretation of \Cref{sec:algorithms} carries over to the complex field without approximation (\cref{thm:separability-real-imag}).
The magnitude-phase splitting only minimizes an upper bound of the complex proximal objective, which is tight for small phase differences and requires magnitudes in $[0, 1]$ (\cref{thm:mag-phase-upper-bound}).
A denoiser applied to the phase image also sees every $2\pi$ wrap as an edge.
For these reasons, all results use the real-imaginary splitting.
Both statements and their proofs are given in \cref{sec:splitting-theorems}.


\subsection{Evaluation Metrics}
All metrics are computed on a region $\Omega$, which is the entire image in simulation.
In the experiments, $\Omega$ is the central disk of radius 125 pixels, the aperture of the iris.
The reconstruction itself uses a support of radius 130~px, an additional 5~px margin that keeps the hard support edge off the illumination edge.
We first compute the cosine similarity (CS) between the reconstructed phase $\hat{\ang}$ and the ground truth phase $\ang$ as 
\begin{equation} \label{eq:cs}
    \mathrm{CS} = \frac{\left| \langle \e^{\j \hat{\ang}}, \e^{\j \ang} \rangle \right|}{\|\e^{\j \hat{\ang}}\| \cdot \|\e^{\j \ang}\|},
\end{equation}
where $\langle \mathbf{a}, \mathbf{b} \rangle = \mathbf{a}\HH \mathbf{b}$, and inner products and norms run over the pixels of $\Omega$.
CS returns a value in $[0, 1]$, with 1 indicating a perfect recovery and 0 an uncorrelated estimate.
CS is agnostic to the potential global phase shift in the reconstruction.

For other metrics on real-valued images which are sensitive to global phase shifts, we first compute the underlying global phase shift of the reconstruction as
\begin{equation}
    \theta^* = \angle \langle \e^{\j \hat{\ang}}, \e^{\j \ang} \rangle,
\end{equation}
and then correct the reconstructed phase as
\begin{equation}
    \hat{\ang}' = \hat{\ang} + \theta^*.
\end{equation}

We then compute the root-mean-square error (RMSE) between the corrected phase and the ground-truth phase, with each pixel difference wrapped to $(-\pi, \pi]$,
\begin{equation} \label{eq:rmse}
    \mathrm{RMSE} = \sqrt{\frac{
    \|\hat{\ang}' - \ang\|_2^2}{|\Omega|}},
\end{equation}
where $|\Omega|$ denotes the number of pixels in $\Omega$.
Lastly, we compute the peak signal-to-noise ratio (PSNR) from the RMSE,
\begin{equation} \label{eq:psnr}
    \mathrm{PSNR} = 20 \log_{10} \frac{\pi}{\mathrm{RMSE}}.
\end{equation}
The numerator is $\pi$ as the peak intensity because the displayed ground-truth phase spans $[0, \pi]$.
Over several frames or realizations, we report the mean of the per-frame values of each metric.

\section{Results} \label{sec:results}
\begin{figure}[t!]
    \centering
    \includegraphics[width=\linewidth]{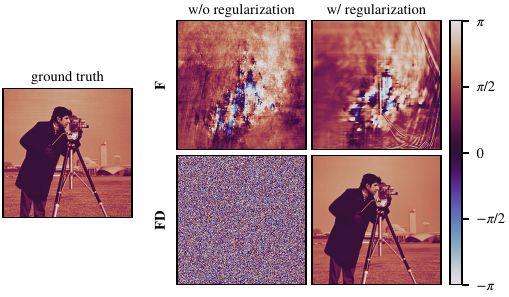}
    \caption{Effects of regularization in simulation. Ground truth phase of the cameraman (left) and its reconstructions from Fourier measurements ($\mathbf{F}$) and from a single CDP ($\mathbf{F}\mathbf{D}$) at a sampling ratio of 4, without and with regularization. Full recovery is only possible from the CDP with regularization.}
    \label{fig:effects-of-reg}
\end{figure}
\subsection{Effects of Regularization} \label{sec:effects-reg}
We study the effect of regularization on reconstruction from Fourier measurements and from a single CDP.
The ground truth is the standard cameraman test image of $256 \times 256$ pixels, used as a phase-only signal with phases in $[0, \pi]$ as for the experimental objects in \cref{sec:experimental_setup}, and shown in \Cref{fig:effects-of-reg} (left).
We zero-pad the width and height of the signal to 512 before applying $\A$, leading to a sampling ratio of 4, at which the Fourier magnitude determines almost every signal up to the trivial ambiguities~\cite{hayes1982reconstruction}.
The random diagonal $\D$ is sampled i.i.d. and uniformly from $\{-1, 1\}$.
Both reconstructions start from a spectral initialization~\cite{luo2019optimal} and run 6000 iterations of the measurement projection, \cref{eq:relaxed_measurement_projection} with $\mu = 1$, without regularization or with the PnP step of \cref{eq:pnp} before each projection.
The noise level of the DRUNet prior follows $\sigma_t = 0.2 \times 0.9997^{t}$.
In all simulations, the real and imaginary parts are mapped by $v \mapsto v / 2 + 1/2$ and clipped to $[0, 1]$ before and after the denoiser, which receives the noise level $\sigma_t$ itself.
Each denoiser call applies one of the eight flips and rotations of the image, cycled over the iterations, and inverts it afterwards. 

\Cref{fig:effects-of-reg} visualizes the phases of the reconstruction results under different
measurements and algorithms. First, we observe that the reconstruction from Fourier measurements without regularization suffers from a strong artifact resembling the twin image of the ground truth. Applying regularization slightly smooths out the twin image artifact, although the reconstruction still deviates greatly from the ground truth. In contrast, reconstruction from the CDP without regularization results in pure noise. However, applying regularization achieves perfect recovery. The potential of regularization on CDPs is thus substantial: it improves the reconstruction from a random guess to a perfect recovery, even from measurements in a single shot. The artifacts explain this difference.
Without regularization, the CDP reconstruction is a noise-like field, which a denoiser removes.
The Fourier reconstruction instead carries the twin image, a reflected copy of the object that looks as natural as the object, so a denoiser does not remove it. 

\subsection{Choice of Regularization} \label{sec:choice-reg}
We replace the DRUNet of \Cref{sec:effects-reg} by five priors that each represent one family of image regularization.
Total variation (TV) is a hand-crafted convex regularization term with a known proximal operator~\cite{rudin1992nonlinear}.
BM3D is a non-local patch-based denoiser~\cite{dabov2007image}.
EPLL is a Gaussian mixture prior on image patches~\cite{zoran2011learning}.
DnCNN is a convolutional denoiser trained for one noise level~\cite{zhang2017beyond}.
We use the public models for 15/255, 25/255 and 50/255 and apply the one trained closest to $\sigma_t$.
DRUNet takes the noise level as an input~\cite{zhang2021plug}.
MFoE is the weakly convex regularization term of \cref{sec:recon-algorithm}, applied through one gradient step per iteration~\cite{ducotterd2025multivariate}.
One gradient step moves the iterate much less than a denoiser does, and with the start $\sigma_0 = 0.2$ of the other priors, MFoE never leaves the spectral initialization.
We start its level at $\sigma_0 = 3$, since smaller values fail for some or all realizations of the forward model, and larger ones lower the accuracy.
Without noise and at sampling ratio 4 as in \Cref{fig:effects-of-reg}, all six priors reach between 35 and 45 dB, and reconstructions are indistinguishable from ground truth.
We add Gaussian noise to the measurements with a standard deviation of 0.1 times the mean intensity, which corresponds to a signal-to-noise ratio of 23 dB.
Negative values are clipped to zero.
At sampling ratio 4, every prior still exceeds 33 dB, and the reconstructions differ only slightly.
To separate the priors, we also lower the sampling ratio to 0.5, so the prior has to supply the discarded half of the coefficients.
The algorithm, the spectral initialization and the schedule $\sigma_t = 0.2 \times 0.9997^{t}$ over 6000 iterations are those of \Cref{sec:sampling-ratio}.
\Cref{fig:reg-comparison} shows the reconstructions.
All six priors converge from the same spectral start at 4.8 dB to a plateau, so the choice of prior does not decide whether the reconstruction succeeds.
It decides the accuracy and the residual artifacts.
TV flattens the face into patches at 26.5 dB.
BM3D and EPLL keep the edges but leave patch-scale residues at 29.5 dB.
DnCNN adds a fine grain at 28.5 dB.
MFoE reaches 29.5 dB with a slightly rougher texture.
Without noise it is the most accurate of the six, by 0.4 to 5 dB over DRUNet across the sampling ratios, but the gap closes at this noise level.
DRUNet gives the smoothest reconstruction at 29.9 dB, and we use it in the rest of the paper.
\begin{figure}[t!]
    \centering
    \includegraphics[width=\linewidth]{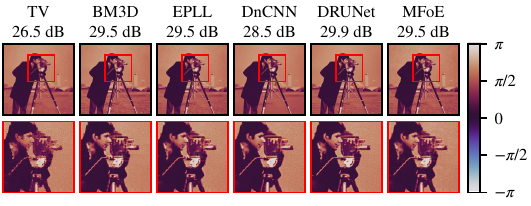}
    \caption{Reconstructions of the cameraman phase from one noisy CDP at sampling ratio 0.5 with six priors and the same algorithm and start. Gaussian noise with a standard deviation of 0.1 times the mean intensity is added to the measurements and negative values are clipped to zero. PSNR above each panel. The bottom row magnifies the $96 \times 96$ pixel region marked by the red box.}
    \label{fig:reg-comparison}
\end{figure}

\subsection{Effect of Sampling Ratios} \label{sec:sampling-ratio}
\begin{figure}[t!]
    \centering
    \begin{subfigure}[b]{\linewidth}
        \centering
        \includegraphics[width=\linewidth]{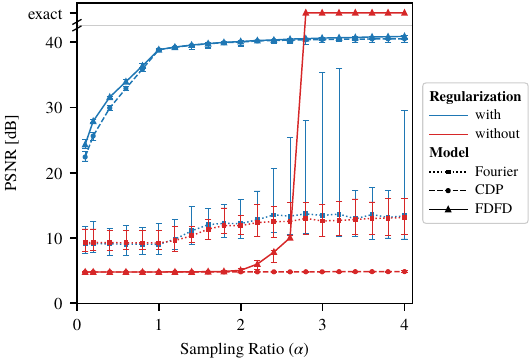}
        \caption{PSNR}
        \label{fig:sr-sweep-psnr}
    \end{subfigure}
    \begin{subfigure}[b]{\linewidth}
        \centering
        \includegraphics[width=\linewidth]{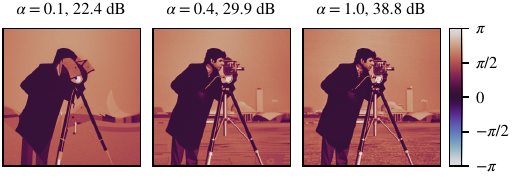}
        \caption{Reconstructed phase}
        \label{fig:sr-sweep-recon}
    \end{subfigure}
    \caption{Reconstruction accuracy under different sampling ratios. (a) Mean PSNR for Fourier measurements, a single CDP, and FDFD with two structured random layers. Without regularization, FDFD recovers the phase exactly from a sampling ratio of 2.8, while CDP stays at the level of a random phase. With regularization, both CDP and FDFD allow reconstruction from very few measurements. Fourier fails with and without regularization, except for a few regularized realizations at high sampling ratios. (b) Reconstructed phase from a CDP with regularization at sampling ratios 0.1, 0.4 and 1, for the realization whose PSNR is the closest to the mean.}
    \label{fig:sr-sweep}
\end{figure}
We investigate reconstruction quality from a single CDP with limited measurements by benchmarking the PSNR under different sampling ratios.
The same signal configuration is used as in \Cref{sec:effects-reg}.
For reference, we run the same sweep for Fourier measurements without diffuser ($L = 0.5$) and for FDFD, the two-layer SRPR model ($L = 2$) that matches dense random matrices~\cite{hu2026structured}.
The sampling ratios range from $0.1$ to $4.0$ with a step of $0.1$, while the 2D measurement size is rounded to the closest integer for each ratio.
We apply the same PnP algorithm as in \Cref{sec:effects-reg} with the noise schedule $\sigma_t = 0.2 \times 0.9997^{t}$ and 6000 iterations to allow thorough convergence.
Without regularization, the spectral initialization is followed by 6000 measurement projections.
\Cref{fig:sr-sweep} plots the average PSNR over 50 realizations of the forward model, with error bars denoting the minimum and the maximum.
Without regularization, the CDP reconstruction fails at every sampling ratio and stays at 4.8~dB, the PSNR of a uniformly random phase error, matching our previous finding in~\cite{hu2026structured}.
FDFD without regularization recovers the phase exactly from a sampling ratio of 2.8.
Fourier measurements fail with and without regularization, with a mean PSNR between 8.9 and 13.8~dB, and only 12 of the 2000 regularized runs exceed 20~dB.
With regularization, the PSNR of CDP rises monotonically from 22~dB at a sampling ratio of 0.1 to 39~dB at 1, and FDFD stays within 2.3~dB of it in \Cref{fig:sr-sweep-psnr}.
\Cref{fig:sr-sweep-recon} shows the corresponding reconstructions for one typical realization per ratio.
At a sampling ratio of 0.1, the result reconstructs the general layout of the signal with a PSNR around 22~dB.
Still, it is filled with piecewise-constant patches due to the smoothing effect of regularization under substantially insufficient measurements.
In comparison, the optimal constant-phase image yields a PSNR of 12.3~dB. 
At 0.4, the scene is recovered with a slight loss of fine texture in particular regions, while the PSNR reaches 30~dB. 
At 1, the reconstruction is visually indistinguishable from the ground truth in \Cref{fig:effects-of-reg} (left).
A fair reconstruction is thus possible well below the weak-recovery threshold of 1 of the dense random model~\cite{mondelli2019fundamental}.
This aligns with our observation in~\cite{ducotterd2025breaking}, which establishes similar results for the two-layer SRPR model.
Notably, the PSNR curve changes its slope at a sampling ratio of 1.
Below 1, the trimming operation $\mathbf{S}$ discards a fraction $1 - \alpha$ of the coefficients of $\F\D\x$.
Only the prior can supply these coefficients, so the error stops decreasing before the noise level reaches its final value.
Above 1, every coefficient is measured, and the error follows the annealing schedule down to its final noise level of 0.033, which gives the plateau at 40.5~dB.
The plateau is thus set by the schedule rather than by the measurements.
A schedule that ends at a lower noise level would raise it.
This remarkable measurement efficiency suggests strong robustness of reconstruction from experimental data.

\subsection{Experimental Results on Test Signals}
\begin{figure}[t!]
    \centering
    \includegraphics[width=\linewidth]{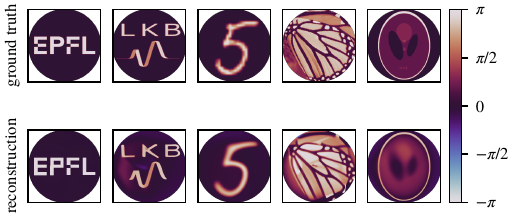}
    \caption{Phases of the five test signals and their reconstructions respectively.}
    \label{fig:experiment-test}
\end{figure}
We construct 5 test signals including two logos, one digit from MNIST~\cite{lecun1998gradient}, one butterfly and one Shepp-Logan test image. The phase configuration of the signals is detailed in \Cref{sec:experimental_setup}. The algorithm uses PnP with DRUNet and real-imaginary splitting. The noise schedule is $\sigma_t = 3.0 \times 0.995^t$, and we observe the optimization trajectory to select the best reconstruction per image. \Cref{fig:experiment-test} visualizes the phases of the ground truth signals together with their reconstructions. It can be seen that the features in the two logos and the MNIST digit can be faithfully reconstructed. For the butterfly image, the speckles on the edge of the iris cannot be reconstructed reliably, possibly due to the low illumination strength of the beam at the edge. For the Shepp-Logan image, the reconstruction slightly blurs the features and loses the fine dots in the ground truth. Nonetheless, the reconstruction still reflects most features in the original signal. 

\subsection{Experimental Results on the Brno Dataset}
\begin{table}[t]
\centering
\caption{Reconstruction frameworks on the 528 valid experimental frames of the Brno dataset. The last row is the same iteration without the denoising step.}
\label{tab:frameworks}
\begin{tabular}{lccc}
\hline
Framework & CS & RMSE [rad] & PSNR [dB] \\
\hline
RED & \textbf{0.9770} & \textbf{0.2106} & \textbf{23.92} \\
PnP-HQS & 0.9768 & 0.2110 & 23.91 \\
ScorePrior & 0.9762 & 0.2147 & 23.73 \\
PnP-PGD & 0.9758 & 0.2159 & 23.75 \\
LSR & 0.9754 & 0.2196 & 23.41 \\
MFoE & 0.9694 & 0.2350 & 23.02 \\
\hline
Without prior & 0.0368 & 1.7726 & 4.97 \\
\hline
\end{tabular}
\end{table}

We further tested reconstruction on the Brno dataset, consisting of 530 images of 5 different cell lines.
We directly apply the images of $600\times600$ pixels to the SLM pattern and reconstruct the central pixels inside the iris of a disk of radius $125$.
We observe 2 corrupted measurements in the experimental data and exclude them from the benchmark.
Apart from these two, the reconstruction converges robustly for all other signals with the five DRUNet frameworks.
\Cref{tab:frameworks} reports the average CS, RMSE and PSNR scores obtained by each framework described in \Cref{sec:algorithms}.
We observe that all five DRUNet frameworks achieve reliable reconstruction for the whole dataset, with a phase RMSE below 0.43~rad on every frame.
The five frameworks built on DRUNet lie within 0.51 dB of each other in PSNR.
Among them, the RED framework provides the best accuracy while PnP-HQS performs very closely.
MFoE, the only framework without a denoiser network, reaches 23.0 dB, 0.9 dB below RED, with four frames above 1 rad of phase RMSE, at about 25 times the cost per iteration.
In comparison, the unregularized baseline obtains only unmeaningful guesses.

\subsection{Different Oversampling Schemes}
\begin{figure}[t!]
    \centering
    \includegraphics[width=\linewidth]{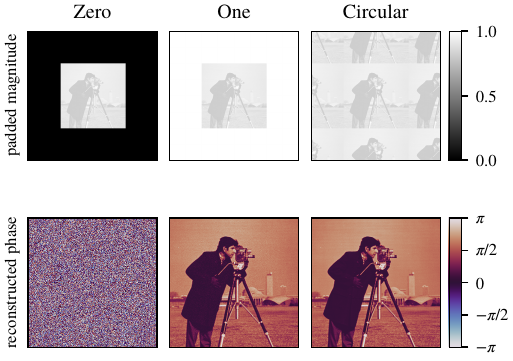}
    \caption{Effects of different padding schemes on the reconstruction from a single CDP without regularization. Both nonzero constant and circular padding do not require regularization.}
    \label{fig:padding}
\end{figure}
All results above use the zero padding as the scheme to oversample the input signal. We simulate two alternatives: circular padding and constant padding by one, and perform unregularized reconstruction on each. The signal has the phase $\pi\mathbf{c}$ and the magnitude $0.8 + 0.2\,\mathbf{c}$, where $\mathbf{c} \in [0, 1]^{256 \times 256}$ is the cameraman image.
The weak magnitude modulation makes the padding schemes visible in \cref{fig:padding}. The rest of the model and algorithm configuration is kept the same as in \Cref{sec:effects-reg}. \Cref{fig:padding} illustrates the visualization of the magnitudes of the oversampled signals and their corresponding phase reconstructions under different padding schemes. Both circular padding and padding by one enable full reconstruction of the input signal even without regularization, unlike the zero-padding case, which returns a random guess. These two padding schemes have a direct physical meaning. First, the circular padding scheme is equivalent to multi-shot captures. Indeed, the model with multiple CDPs in \cref{eq:coded-diff} can be reformulated as
\begin{equation}
    \A = \begin{bmatrix} \F\D_{1} \\ \vdots \\ \F\D_{k} \end{bmatrix} = \begin{bmatrix} \F & & \\ & \ddots & \\ && \F \end{bmatrix} \begin{bmatrix} \D_1 & & \\ & \ddots & \\ && \D_k \end{bmatrix} \begin{bmatrix} \mathbf{I} \\ \vdots \\ \mathbf{I} \end{bmatrix} = \tilde{\F} \tilde{\D} \O,
\end{equation}
where $\tilde{\F}$ is a block Fourier matrix, $\tilde{\D}$ is the diagonal matrix by aggregating $\D_i$, and $\O$ is a stack of identity matrices, which represents circular padding. The sole difference between the multi-shot model and the single-shot model with circular padding is the structured transform $\tilde{\F}$ and $\F$, which does not affect the performance of the model~\cite{hu2026structured}. Furthermore, constant padding by one is equivalent to adding a background reference field to the signal, as in holography~\cite{gabor1948new,guizar2007holography}. The results in \cref{fig:padding} show that both multi-shot and adding references are valid approaches to circumvent the necessity of regularization during reconstruction. Nonetheless, the single-shot reference-free system as modeled by zero-padding requires regularization to work.

\section{Discussion}
Our results show that an image prior turns the one-layer SRPR model, which is exactly phase retrieval from a single CDP, into a practical phase imaging modality.
Compared with the two-layer model of~\cite{hu2026structured}, the optical system reduces to a single random mask and a single Fourier transform, and one exposure of the object suffices instead of the several masks required by the existing guarantees and experimental realizations of coded diffraction imaging~\cite{seaberg2015coherent,katkovnik2017computational,wu2019wish}.

The role of the prior in this setting differs from the incremental gain in reconstruction quality that regularization usually brings.
Without a prior, reconstruction from a single CDP returns a random guess regardless of the sampling ratio, both in simulation (\Cref{fig:sr-sweep}) and on the experimental data (\Cref{tab:frameworks}). With a prior, the same measurements yield an accurate reconstruction at a sampling ratio of 1 and a fair one down to 0.1.
The prior is therefore a feasibility condition rather than a refinement.
The padding analysis in \Cref{fig:padding} offers a complementary view: the two schemes that make the measurements solvable without a prior, circular padding and constant padding, correspond physically to additional exposures and to a reference beam, that is, to additional measurements.
Regularization supplies prior information of a natural image to navigate through the ill-posed optimization landscape with a single reference-free exposure.

The prior itself requires no training.
All reconstructions use DRUNet with its publicly available weights, trained for Gaussian denoising of natural grayscale images, and the real-imaginary splitting of \Cref{thm:separability-real-imag} applies it to the complex-valued field without any modification.
Other denoisers and the explicit MFoE term reach the same recovery in simulation, which indicates that the essential ingredient is the generic structure of natural images captured by the prior rather than a particular architecture.
On the experimental data, the MFoE term needs several gradient steps per iteration to leave the random initialization.
With 30 steps, it stays $0.4$ to $0.9$~dB below the DRUNet frameworks, which shows that the implicit prior of a denoiser is also the more economical one.

The five DRUNet frameworks compared in \Cref{tab:frameworks} share one algorithmic skeleton and differ in the weight of the denoising step, in the data step and in the noise subtraction.
On the experimental dataset they perform within $0.51$~dB of each other, and the explicit MFoE term is $0.4$ to $0.9$~dB below them.
RED reaches the best accuracy, but its margin over PnP-HQS ($4 \times 10^{-4}$~rad in RMSE) is comparable to the variation between random initializations.
The choice of framework thus matters much less than the presence of the prior.

The main limitation of our experiment is that the same SLM displays both the object phase and the random mask.
This provides a reliable experimental reference for the metrics, but it leaves aside complexities of real-world samples with a diffuser at another physical plane.

A natural extension of this work is to apply the method to biological samples.
A second direction is uncertainty quantification over the random diffuser.
The accuracy varies across diffuser realizations, as the error bars in \Cref{fig:sr-sweep-psnr} show.
The pixel-wise spread of reconstructions from several diffusers would give a confidence map for the reconstructed phase.
Together, they would make the method attractive for time-resolved observation of living cells.
\appendices
\section{Diffuser Patterns and Camera Measurements} \label{sec:app-diffuser} \label{sec:camera-measurements}
\Cref{fig:diffuser-patterns} shows the six diffuser patterns of \cref{sec:diffuser-design} on the full SLM frame.
The independent patterns are piecewise constant, and the magnified insets show the phase jumps at every superpixel boundary.
The speckle patterns vary smoothly without such jumps, and their grains grow as the radius $r$ of the spectral disk decreases.
\begin{figure}[t!]
    \centering
    \includegraphics[width=1.0\linewidth]{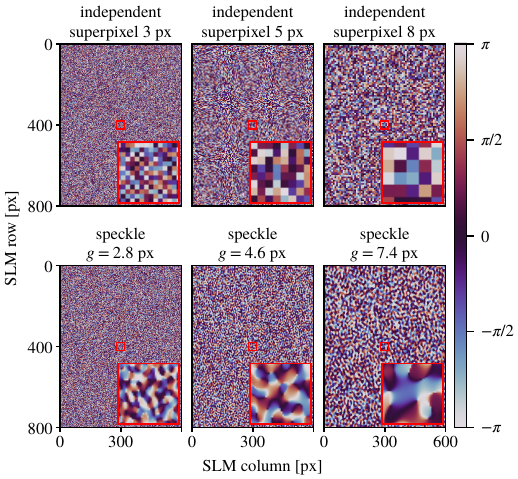}
    \caption{Diffuser patterns at three levels of randomness on the full $800 \times 600$ SLM frame. The red box marks the $40 \times 40$ pixel region at the SLM center that is magnified in the lower-right corner of each panel. Top: independent phases per superpixel of 3, 5 and 8 pixels. Bottom: speckle patterns with disk radii $r = 125$, $75$ and $47$ samples, i.e. grain sizes $g = 2.8$, $4.6$ and $7.4$ pixels according to \cref{eq:grain-size}.}
    \label{fig:diffuser-patterns}
\end{figure}

\Cref{fig:pattern-choice} shows the reconstructions of the same Shepp-Logan phantom with the six patterns.
All six reconstructions show the outer ring and the two inner ellipses of the phantom, with different completeness.
The independent patterns lose parts of the outer ring, and the patterns with 3 and 8 pixels add a bright artifact near the center of the disk.
The speckle patterns recover more of the ring.
With $g = 2.8$ and $7.4$ pixels, the ring stays open at the lower left.
Only the speckle pattern with $g = 4.6$ pixels recovers the outer ring as a closed outline, together with the bright region at the top of the head.
\begin{figure}[t!]
    \centering
    \begin{subfigure}[c]{0.33\linewidth}
        \centering
        \includegraphics[width=\linewidth]{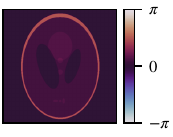}
        \caption{Ground Truth}
        \label{fig:pattern-choice-gt}
    \end{subfigure}
    \hfill
    \begin{subfigure}[c]{0.65\linewidth}
        \centering
        \includegraphics[width=\linewidth]{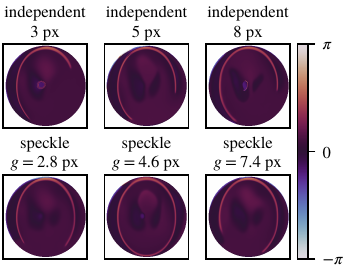}
        \caption{Reconstruction}
        \label{fig:pattern-choice-recon}
    \end{subfigure}
    \caption{Shepp-Logan phantom of the earlier batch reconstructed with the six diffuser patterns, in the order of \cref{fig:diffuser-patterns}. Pixels outside the support of radius 120 pixels, 5 pixels beyond the beam radius of this batch, are blank.}
    \label{fig:pattern-choice}
\end{figure}

\Cref{fig:measurements} compares the simulated and the captured measurement of one frame.
Both frames concentrate their intensity in a central plateau and a halo, marked by the two ellipses on the captured frame.
Beyond the halo, the simulated intensity stays three to four orders of magnitude below the maximum up to the edge of the grid.
The captured intensity is at the noise floor of the camera there and vanishes after noise subtraction.
The captured frame also has a bright spot at its center, which the forward model does not contain.
Before reconstruction, the pixels within 6 pixels of the center are replaced by the median of the surrounding ring of width 5 pixels.
\begin{figure}[t!]
    \centering
    \includegraphics[width=1.0\linewidth]{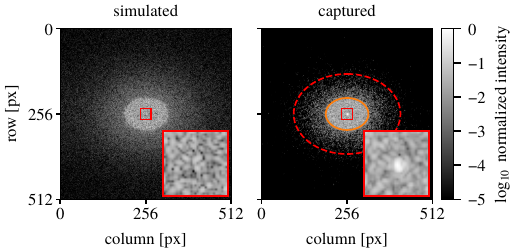}
    \caption{Simulated and captured measurement of the same frame in logarithmic scale. The captured frame is shown after the noise subtraction applied before reconstruction. The solid ellipse (semi-axes $64 \times 48$ pixels) encloses the central plateau and the dashed ellipse (semi-axes $160 \times 120$ pixels) the halo. The inset in the lower-right corner of each panel magnifies the central $32 \times 32$ pixels marked by the red box, on the same intensity scale.}
    \label{fig:measurements}
\end{figure}

\section{Proximal Operators of the Two Complex-Valued Splittings} \label{sec:splitting-theorems}
We assume that the denoiser is the proximal operator of a regularization term $\reg$ on real-valued images~\cite{gribonval2011map}.
Under this assumption, the two splittings have different meanings.
\Cref{thm:separability-real-imag} shows that the real-imaginary splitting is exactly the proximal operator of $\reg(\x_{\real}) + \reg(\x_{\imag})$, so the plug-and-play interpretation of \Cref{sec:algorithms} carries over to the complex field without approximation.
\begin{theorem} \label{thm:separability-real-imag}
    Let $\reg$ be a regularization term on real-valued images with proximal operator $\prox_{\lambda \reg}$, $\lambda > 0$.
    For every complex image $\x = \x_{\real} + \j \x_{\imag}$,
    \begin{equation}
       \prox_{\lambda (\reg(\x_{\real}) + \reg(\x_{{\imag}}))}(\x) = \prox_{\lambda \reg}(\x_{\real}) + \j \prox_{\lambda \reg}(\x_{{\imag}}).
   \end{equation}
\end{theorem}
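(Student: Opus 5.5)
The plan is to identify $\C^{n}$ with $\R^{n}\times\R^{n}$ through $\x \mapsto (\x_{\real}, \x_{\imag})$ and to check that the proximal objective splits into two independent real problems.

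First, I write out the definition of the proximal operator recalled in \Cref{sec:algorithms} for the function $g(\mathbf{z}) = \reg(\mathbf{z}_{\real}) + \reg(\mathbf{z}_{\imag})$:
\begin{equation*}
\prox_{\lambda g}(\x) = \arg\min_{\mathbf{z}\in\C^{n}} \lambda\reg(\mathbf{z}_{\real}) + \lambda\reg(\mathbf{z}_{\imag}) + \tfrac{1}{2}\|\mathbf{z}-\x\|^{2}.
\end{equation*}
The key identity is that the complex Euclidean norm decomposes as $\|\mathbf{z}-\x\|^{2} = \|\mathbf{z}_{\real}-\x_{\real}\|^{2} + \|\mathbf{z}_{\imag}-\x_{\imag}\|^{2}$, because $|a+\j b|^{2}=a^{2}+b^{2}$ pixelwise. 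Substituting this, the objective becomes $\Phi_{\real}(\mathbf{z}_{\real}) + \Phi_{\imag}(\mathbf{z}_{\imag})$, where
\begin{align*}
\Phi_{\real}(\mathbf{u}) &= \lambda\reg(\mathbf{u}) + \tfrac{1}{2}\|\mathbf{u}-\x_{\real}\|^{2},\\
\Phi_{\imag}(\mathbf{v}) &= \lambda\reg(\mathbf{v}) + \tfrac{1}{2}\|\mathbf{v}-\x_{\imag}\|^{2}.
\end{align*}
Each term depends on only one of the two free real variables $\mathbf{u}=\mathbf{z}_{\real}$ and $\mathbf{v}=\mathbf{z}_{\imag}$.

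Second, I use the elementary fact that a sum of functions of disjoint variables is minimized exactly at the pairs of minimizers. If $\mathbf{u}^{*}$ minimizes $\Phi_{\real}$ and $\mathbf{v}^{*}$ minimizes $\Phi_{\imag}$, then $\Phi_{\real}(\mathbf{u})+\Phi_{\imag}(\mathbf{v}) \ge \Phi_{\real}(\mathbf{u}^{*})+\Phi_{\imag}(\mathbf{v}^{*})$ for all $(\mathbf{u},\mathbf{v})$. Conversely, if $(\mathbf{u},\mathbf{v})$ is a joint minimizer, then fixing one coordinate at its optimal value shows the other must minimize its own term. Hence the argmin set equals $\arg\min\Phi_{\real}\times\arg\min\Phi_{\imag}$. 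By definition, these sets are $\prox_{\lambda\reg}(\x_{\real})$ and $\prox_{\lambda\reg}(\x_{\imag})$. Recombining as $\mathbf{u}^{*}+\j\mathbf{v}^{*}$ gives the claim.

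The main point needing care is that $\reg$ need not be convex, as for the implicit MMSE regularizer of~\cite{gribonval2011map}, so the proximal operator may be set-valued. I will state the identity as an equality of sets of minimizers, which the product argument above handles without convexity. When $\prox_{\lambda\reg}$ is single-valued, as the theorem's notation assumes, this reduces to the displayed equality of points. Existence of minimizers for the complex problem is inherited directly from existence for the real one, since the argmin sets coincide.
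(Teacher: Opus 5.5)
Your proposal is correct and follows the paper's own proof: the squared distance splits into real and imaginary parts, so the proximal minimization splits into two independent real problems. Your extra care about nonconvex, set-valued proximal operators (equality of argmin sets as a product) is a sound refinement that the paper's one-line argument leaves implicit.
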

\begin{proof}
    The squared distance separates into the real and imaginary parts, so the minimization does as well,
    \begin{align}
        &\prox_{\lambda (\reg(\x_{\real}) + \reg(\x_{{\imag}}))}(\x) \\
        = &\argmin_{\mathbf{z}}\frac{1}{2}\| \mathbf{z}- \x \|^{2}+ \lambda (\reg(\mathbf{z}_{\real}) + \reg(\mathbf{z}_{{\imag}})) \\
        = &\argmin_{\mathbf{z}_{\real} + \j \mathbf{z}_{{\imag}}}\frac{1}{2}\| \mathbf{z}_{\real}- \x_{\real} \|^{2}+ \lambda \reg(\mathbf{z}_{\real})+ \\
        & \frac{1}{2}\| \mathbf{z}_{{\imag}}- \x_{{\imag}} \|^{2}+ \lambda \reg(\mathbf{z}_{{\imag}}) \\
        = &\prox_{\lambda \reg}(\x_{\real}) + \j \prox_{\lambda \reg}(\x_{{\imag}}),
   \end{align}
   which is the claim.
\end{proof}
The induced regularization term is not invariant to a global phase rotation, which mixes the two parts.
In practice the iteration settles on one global phase, and the metrics are aligned to it.

\Cref{thm:mag-phase-upper-bound} shows that the magnitude-phase splitting minimizes an upper bound of the complex proximal objective.
It is therefore a majorization-minimization step rather than an exact proximal step.
\begin{theorem} \label{thm:mag-phase-upper-bound}
    Let $\abs, \hat{\abs} \in [0, 1]^{n}$ be magnitudes and $\ang, \hat{\ang} \in \R^{n}$ phases.
    For every regularization term $\reg$ and $\lambda > 0$,
    \begin{gather}
        \frac{1}{2}\| \hat{\abs} \e^{\j \hat{\ang}} - \abs \e^{\j \ang} \|^{2}+ \lambda \reg(\hat{\abs}) + \lambda \reg(\hat{\ang}) \leq \\ \frac{1}{2}\| \hat{\abs} - \abs \|^{2}+ \lambda \reg(\hat{\abs}) + \frac{1}{2}\| \hat{\ang} - \ang \|^{2}+ \lambda \reg(\hat{\ang}).
    \end{gather}
\end{theorem}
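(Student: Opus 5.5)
The regularization terms $\lambda\reg(\hat{\abs})+\lambda\reg(\hat{\ang})$ appear identically on both sides of the inequality. The claim therefore reduces to a purely pointwise statement about the quadratic terms,
\begin{equation*}
\| \hat{\abs} \e^{\j \hat{\ang}} - \abs \e^{\j \ang} \|^{2} \leq \| \hat{\abs} - \abs \|^{2} + \| \hat{\ang} - \ang \|^{2}.
\end{equation*}
Both sides separate over pixels, so it suffices to prove the scalar inequality $|\hat a \e^{\j\hat\phi} - a\e^{\j\phi}|^2 \leq (\hat a - a)^2 + (\hat\phi-\phi)^2$ for $a,\hat a\in[0,1]$ and $\phi,\hat\phi\in\R$, and then sum over the $n$ pixels.

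For the scalar inequality I will expand the left-hand side with the law of cosines:
\begin{equation*}
|\hat a \e^{\j\hat\phi} - a\e^{\j\phi}|^2 = \hat a^2 + a^2 - 2a\hat a\cos(\hat\phi-\phi) = (\hat a - a)^2 + 2a\hat a\bigl(1-\cos(\hat\phi-\phi)\bigr).
\end{equation*}
The magnitude difference thus appears exactly, and it remains to bound the cross term by $(\hat\phi-\phi)^2$. I will use $1-\cos\delta = 2\sin^2(\delta/2) \leq \delta^2/2$, which follows from $|\sin u|\leq |u|$. This gives $2a\hat a(1-\cos\delta)\leq a\hat a\,\delta^2$.

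The magnitude hypothesis enters only here, through $a\hat a\leq 1$ for $a,\hat a\in[0,1]$. This yields the bound $\delta^2 = (\hat\phi-\phi)^2$ and explains why the theorem requires magnitudes in $[0,1]$. Tightness for small phase differences, as remarked in the text, comes from $1-\cos\delta\approx\delta^2/2$ and is not needed for the proof.

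No step is a real obstacle. The only care point is to keep the phases unwrapped as real numbers, which is the setting of the statement, so that $\delta = \hat\phi-\phi$ can be used directly; the bound via $|\sin u|\le|u|$ holds for all real $\delta$. Adding $\lambda\reg(\hat{\abs})+\lambda\reg(\hat{\ang})$ to both sides after summing the pixelwise inequality concludes the proof.
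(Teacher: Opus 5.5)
Your proof is correct and matches the paper's argument: expand the squared modulus into $(\hat a - a)^2 + 2a\hat a\,(1-\cos\delta)$, then bound the cross term using $1-\cos\delta\le\delta^2/2$ and $a\hat a\le 1$. One step is implicit: multiplying $1-\cos\delta\le\delta^2/2$ by $2a\hat a$ also needs $a\hat a\ge 0$, which the hypothesis $a,\hat a\in[0,1]$ provides.
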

\begin{proof}
    Expanding the squared norm and using $1 - \cos \beta \leq \beta^{2} / 2$ and $r_{i} \hat{r}_{i} \leq 1$,
    \begin{align*}
        & \frac{1}{2}\| \hat{\abs} \e^{\j \hat{\ang}} - \abs \e^{\j \ang} \|^{2}+ \lambda \reg(\hat{\abs}) + \lambda \reg(\hat{\ang}) \\
        = & \frac{1}{2} \| \hat{\abs} \|^{2} + \frac{1}{2} \| \abs \|^{2} - \sum_{i=1}^{n} r_{i} \hat{r}_{i} \cos{(\hat{\theta}_{i} - \theta_{i})} + \lambda \reg(\hat{\abs}) + \lambda \reg (\hat{\ang}) \\
        = & \frac{1}{2} \| \hat{\abs}  - \abs \|^{2} + \sum_{i=1}^{n} r_{i} \hat{r}_{i} (1 - \cos{(\hat{\theta}_{i} - \theta_{i})}) + \lambda \reg(\hat{\abs}) + \lambda \reg (\hat{\ang}) \\
        \leq & \frac{1}{2} \| \hat{\abs}  - \abs \|^{2} + \sum_{i=1}^{n} r_{i} \hat{r}_{i} \frac{{(\hat{\theta}_{i} - \theta_{i})}^{2}}{2} + \lambda \reg(\hat{\abs}) + \lambda \reg (\hat{\ang}) \\
        \leq & \frac{1}{2} \| \hat{\abs}  - \abs \|^{2} + \lambda \reg(\hat{\abs}) + \frac{1}{2}\| \hat{\ang} - \ang \|^{2}+ \lambda \reg (\hat{\ang}),
    \end{align*}
    which is the claim.
\end{proof}
The bound is tight for small phase differences, uses the unwrapped phase difference and requires magnitudes in $[0, 1]$.

\bibliography{refs}
\bibliographystyle{IEEEtran}
\end{document}